\theoremstyle{plain}
\newtheorem{lemma}{Lemma}
\newtheorem{theorem}{Theorem}
\let\origeqref\eqref
\newcommand{\myeqref}[1]{\origeqref{eq:#1}}
\let\eqref\myeqref
\DeclareMathOperator{\GL}{GL}
\DeclareMathOperator{\LGL}{LGL}
\DeclareMathOperator{\linspan}{span}
\DeclareMathOperator{\tr}{tr}
\DeclareMathOperator{\diag}{diag}
\DeclareMathOperator{\cof}{cof}
\DeclareMathOperator{\codim}{codim}
\DeclareMathOperator{\vdim}{vdim}
\DeclareMathOperator{\eval}{ev}
\DeclareMathOperator{\Hom}{Hom}
\DeclareMathOperator{\End}{End}
\newcommand{\deq}{\mathrel{:=}}
\newcommand{\isom}{\cong}
\newcommand{\bb}[1]{\mathbb{#1}}
\newcommand{\e}{\mathrm{e}}
\newcommand{\app}[3]{#1\colon #2\to #3}
\newcommand{\rist}[2]{\left. #1\right|_{#2}}
\newcommand{\quoz}[2]{{#1/{#2}}}
\newcommand{\pair}[2]{\langle#1,#2\rangle}
\newcommand{\comm}[2]{[#1,#2]}
\newcommand{\abs}[1]{\left|#1\right|}
\newcommand{\trasp}[1]{{#1}^{t}}
\newcommand{\oGr}{\mathbb{G}\mathrm{r}}
\newcommand{\Gr}{\mathrm{Gr}}
\newcommand{\oGrrat}{\oGr^{\mathrm{rat}}}
\newcommand{\Grrat}{\Gr^{\mathrm{rat}}}
\newcommand{\Grhom}{\Gr^{\mathrm{hom}}}
\newcommand{\Grad}{\Gr^{\mathrm{ad}}}
\newcommand{\CM}{\mathcal{C}}
\newcommand{\CP}{\bb{C}P^{1}}
\begin{document}

\title{On rational solutions of multicomponent and matrix KP hierarchies}
\author{Alberto Tacchella\thanks{E-mail: \texttt{tacchell@dima.unige.it}}}
\date{\today}

\maketitle

\begin{abstract}
  We derive some rational solutions for the multicomponent and matrix KP
  hierarchies generalising an approach by Wilson. Connections with the
  multicomponent version of the KP/CM correspondence are discussed.
\end{abstract}

\section{Introduction}

The \emph{KP hierarchy} is an integrable hierarchy of partial differential
equations generated by a pseudo-differential operator of the form
\begin{equation}
  \label{eq:1}
  L = D + u_{1}D^{-1} + u_{2}D^{-2} + \dots
\end{equation}
Here the \((u_{i})_{i\geq 1}\) are elements of a differential algebra
\(\mathcal{A}\) of smooth functions of a variable \(x\) (with
\(D=\partial/\partial x\)) and a further infinite family of variables \(\bm{t}
= (t_{i})_{i\geq 1}\) . The evolution of \(L\) is determined by the following
system of Lax-type equations:
\begin{equation}
  \label{eq:2}
  \partial_{k}L = \comm{B_{k}}{L} \qquad (k\geq 1)
\end{equation}
where \(\partial_{k} = \partial/\partial t_{k}\) and \(B_{k} = (L^{k})_{+}\). 
All these equations commute, and the variable \(x\) may be identified with
\(t_{1}\). This hierarchy of equations is naturally viewed as a dynamical
system defined on an infinite-dimensional Grassmann manifold, as discovered by
Sato \cite{sato81}. In \cite{sw85} Segal and Wilson developed a very general
framework for building solutions to the KP hierarchy out of the points of a
certain Grassmannian \(\Gr(H)\) of closed subspaces in a Hilbert space \(H\).

The \emph{multicomponent KP hierarchy} (mcKP) is a generalisation of the KP
hierarchy obtained by replacing \(\mathcal{A}\) with the differential algebra
of \(r\times r\) matrices whose entries belong to an algebra of smooth
functions of a variable \(x\) and \(r\) further families of variables
\[ \bm{t}^{(1)} = (t^{(1)}_{i})_{i\geq 1} \quad \dots \quad \bm{t}^{(r)} =
(t^{(r)}_{i})_{i\geq 1} \]
which we will collectively denote by \(\bar{\bm{t}}\). The operator \eqref{1}
now becomes
\begin{equation}
  \label{eq:3}
  L = D + U_{1} D^{-1} + U_{2} D^{-2} + \dots
\end{equation}
where \((U_{i})_{i\geq 1}\) are \(r\times r\) matrices. To define the
evolution of \(L\) we introduce a set of \(r\) matrix pseudo-differential
operators \(R_{\alpha}\) of the form\footnote{Greek indices will henceforth
  run from \(1\) to \(r\).}
\begin{equation}
  \label{eq:4}
  R_{\alpha} = E_{\alpha} + R_{1\alpha} D^{-1} + R_{2\alpha} D^{-2} + \dots
\end{equation}
where \(E_{\alpha}\) is the matrix with \(1\) in the entry \((\alpha,\alpha)\)
and zero elsewhere. These operators are required to satisfy the equations
\(\comm{L}{R_{\alpha}} = 0\), \(\comm{R_{\alpha}}{R_{\beta}} = 0\) and
\(\sum_{\alpha} R_{\alpha} = I_{r}\) (it can be shown that such operators do
exist). The evolution equations for the mcKP hierarchy are then
\begin{equation}
  \label{eq:5}
  \partial_{k\alpha} L = \comm{B_{k\alpha}}{L} \qquad (k\geq 1,\,
  \alpha=1\dots r)
\end{equation}
where \(\partial_{k\alpha} = \partial/\partial t_{k}^{(\alpha)}\) and
\(B_{k\alpha} = (L^{k}R_{\alpha})_{+}\). The variable \(x\) may be identified
with \(\sum_{\gamma} t_{1}^{(\gamma)}\); if we define also for each \(k\geq
2\) the new variables \(t_{k} = \sum_{\gamma} t_{k}^{(\gamma)}\) then the
corresponding flows determine a sub-hierarchy of equations that we call the
\emph{matrix KP hierarchy}.

In \cite{wils93} George Wilson obtained a complete classification of the
rational solutions to the KP hierarchy. More precisely, he proved that the
coefficients of an operator \(L\) satisfying \eqref{1} are proper (i.e.,
vanishing at infinity) rational functions of \(x\) exactly when \(L\) comes
from a point of a certain sub-Grassmannian of \(\Gr(H)\), which he called the
\emph{adelic Grassmannian}. 
This space turns out to be in one-to-one
correspondence with the phase space of the rational Calogero-Moser system
\cite{wils98}; this provides a geometric explanation for the phenomenon, first
noticed by Krichever in \cite{krich78}, that the poles of a rational solution
to the KP equation evolve as a system of point particles described by the
Calogero-Moser Hamiltonian. On the other hand, the adelic Grassmannian may
also be seen as the moduli space of isomorphism classes of right ideals in the
Weyl algebra \cite{bw00}, thereby linking the subject to the emerging field of
noncommutative algebraic geometry.
This web of connections 
is sometimes referred to as the ``KP/CM correspondence'' (see also \cite{bzn03}
for a wider perspective on that matter).

The purpose of this paper is to establish some rationality results, obtained
in the author's PhD Thesis \cite{tesi}, for the solutions of multicomponent
and matrix KP hierarchies. Our main motivation is to understand how the
above-mentioned results generalise to the multicomponent setting. The paper is
organised as follows. In section \ref{sec:2} we briefly recall the mapping
between points in the \(r\)-component Segal-Wilson Grassmannian \(\Gr(r)\) and
solutions of multicomponent KP. In section \ref{sec:3} we analyse the
multicomponent rational Grassmannian \(\Grrat(r)\) and display an explicit
formula for the Baker and tau functions associated to these points. In section
\ref{sec:4} we prove two rationality results: in Theorem \ref{th:mcKP} we
consider mcKP solutions coming from certain (very special) points in
\(\Grrat(r)\), whereas in Theorem \ref{th:mKP} we restrict to the matrix KP
hierarchy but consider a much larger subset of \(\Grrat(r)\). Finally in
section \ref{sec:5} we briefly comment on the relevance of these results for
the multicomponent version of the KP/CM correspondence.

\paragraph{Acknowledgements.}

The author would like to thank George Wilson, Igor Mencattini, Claudio
Bartocci and Volodya Rubtsov for helpful discussions related to this work.

\section{Preliminaries}
\label{sec:2}

We start by briefly recalling the definition of the \(r\)-component
Segal-Wilson Grassmannian \cite{sw85,ps86}. Consider the Hilbert space
\(H^{(r)} = L^{2}(S^{1},\bb{C}^{r})\); its elements can be thought of as
functions \(\bb{C}\to \bb{C}^{r}\) by embedding \(S^{1}\) in the complex plane
as the circle \(\gamma_{R}\) with centre \(0\) and radius \(R\in \bb{R}^{+}\).
We have the splitting \(H^{(r)} = H^{(r)}_{+}\oplus H^{(r)}_{-}\) in the two
subspaces consisting of functions with only positive (resp. negative) Fourier
coefficients, with associated orthogonal projections \(\pi_{\pm}\). The
\emph{Segal-Wilson Grassmannian} of \(H^{(r)}\), denoted by \(\Gr(r)\), is the
set of all closed linear subspaces \(W\subseteq H^{(r)}\) such that
\(\rist{\pi_{+}}{W}\) is a Fredholm operator of index zero and
\(\rist{\pi_{-}}{W}\) is a compact operator.

If we take the elements of \(H^{(r)}\) to be row vectors for definiteness, the
loop group \(\LGL(r,\bb{C})\) naturally acts on \(H^{(r)}\) (hence on
\(\Gr(r)\)) by matrix multiplication from the right. We define
\(\Gamma_{+}(r)\) as the subgroup consisting of diagonal matrices of the form%
\footnote{More generally, we could consider the subgroup determined by a
  maximal torus of type \(\underline{r}\) (where \(\underline{r}\) is any
  partition of \(r\)) in \(\GL(r,\bb{C}\)); this gives rise to the so-called
  \emph{Heisenberg flow of type} \(\underline{r}\). However, as shown e.g. in
  \cite{dm95}, these flows are simply the pullback on \(\Gr(r)\) of multiple
  lower-dimensional mcKP flows.}
\begin{equation}
  \label{eq:6}
  \diag(g_{1}, \dots, g_{r}) \quad\text{ with } g_{\alpha}\in \Gamma_{+}
\end{equation}
where \(\Gamma_{+}\) is the group of analytic functions \(\app{g}{S^{1}}
{\bb{C}^{*}}\) that extend to holomorphic functions on the disc \(\set{z\in
  \CP | \abs{z}\leq R}\) and such that \(g(0)=1\) (cfr. \cite{sw85}). It
follows that for each \(g_{\alpha}\) there exists a holomorphic function
\(f_{\alpha}\) such that \(g_{\alpha} = \e^{f_{\alpha}}\) (with
\(f_{\alpha}(0) = 0\)), and by letting \(f_{\alpha} = \sum_{i\geq 1}
t^{(\alpha)}_{i} z^{i}\) we have
\begin{equation}
  \label{eq:7}
  g_{\alpha}(z) = \exp \sum_{i\geq 1} t^{(\alpha)}_{i} z^{i}
\end{equation}
Thus a generic matrix \(g\in \Gamma_{+}(r)\) may be written in the form
\begin{equation}
  \label{eq:8}
  g = \exp \diag(\sum_{i\geq 1} t^{(1)}_{i} z^{i}, \dots, \sum_{i\geq 1}
  t^{(r)}_{i} z^{i})
\end{equation}
and is totally described by the family of coefficients \(\bar{\bm{t}} =
(\bm{t}^{(1)}, \dots, \bm{t}^{(r)})\). 

We now recall the mapping between points of \(\Gr(r)\) and solutions to the
multicomponent KP hierarchy. In what follows we will say that a matrix-valued
function \(\psi(z)\) belongs to a subspace \(W\in \Gr(r)\) if and only if each
row of \(\psi\), seen as an element of \(H^{(r)}\), belongs to \(W\).

For every \(W\in \Gr(r)\) we define
\begin{equation}
  \label{eq:9}
  \Gamma_{+}(r)^{W}\deq \set{g\in \Gamma_{+}(r) | Wg^{-1} \text{ is
      transverse}}
\end{equation}
where ``transverse'' means that the orthogonal projection \(Wg^{-1}\to
H^{(r)}_{+}\) is an isomorphism. For any \(g\in \Gamma_{+}(r)^{W}\) the
\emph{reduced Baker function} associated to \(W\) and \(g\) is the
matrix-valued function \(\psi\) whose row \(\psi_{\alpha}\) is the inverse
image of \(e_{\alpha}\in H^{(r)}_{+}\) (the \(\alpha\)-th element of the
canonical basis of \(\bb{C}^{r}\)) by \(\rist{\pi_{+}}{Wg^{-1}}\). It follows
straightforwardly that
\begin{equation}
  \label{eq:10}
  \tilde{\psi}_{W}(g,z) = I_{r} + \sum_{i\geq 1} W_{i}(g) z^{-i}
\end{equation}
for some matrices \((W_{i})_{i\geq 1}\). Now, since each row of the matrix
\(\tilde{\psi}_{W}\) belongs to the subspace \(Wg^{-1}\), each row of the
product matrix \(\tilde{\psi}_{W}g\) will belong to \(W\); the \emph{Baker
  function} associated to \(W\) is the map \(\psi_{W}\) which sends \(g\in
\Gamma_{+}(r)^{W}\) to this matrix:
\begin{equation}
  \label{eq:11}
  \psi_{W}(g,z) = \biggl( I_{r} + \sum_{i\geq 1} W_{i}(g) z^{-i}\biggr) g(z)
\end{equation}
Notice that for every \(\eta\in \LGL(r,\bb{C})\) one has
\(\tilde{\psi}_{W\eta}(g,z) = \tilde{\psi}_{W}(g\eta^{-1},z)\), so that
\begin{equation}
  \label{eq:14}
  \psi_{W\eta}(g,z) = \psi_{W}(g\eta^{-1},z)\cdot \eta
\end{equation}
We now recall that expressions such as \eqref{11} are in one-to-one
correspondence with zeroth-order pseudo-differential operators of the form
\begin{equation}
  \label{eq:12}
  K_{W} = I_{r} + \sum_{i\geq 1} W_{i}(\bar{\bm{t}}) D^{-i}
\end{equation}
where we consider \(g\) as a function of the coefficients \(\bar{\bm{t}}\)
defined by \eqref{8}. From \(K_{W}\) we can define a first order
pseudo-differential operator \(L_{W}\) via the following prescription
(``dressing''):
\begin{equation}
  \label{eq:13}
  L_{W} = K_{W} D K_{W}^{-1}
\end{equation}
The following result was proved in \cite{dick93}:
\begin{theorem}
  For any point \(W\in \Gr(r)\) the operator \(L_{W}\) defined by \eqref{13}
  satisfies the multicomponent KP equation.
\end{theorem}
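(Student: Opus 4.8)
The plan is to reduce everything to a single identity for the Baker function, namely
\[ \partial_{k\alpha}\psi_W = B_{k\alpha}\psi_W, \]
from which the Lax equations \eqref{5} follow by a dressing argument. First I would fix the bookkeeping by taking the auxiliary operators to be $R_\alpha = K_W E_\alpha K_W^{-1}$; a direct computation shows that these satisfy $\comm{L_W}{R_\alpha}=0$, $\comm{R_\alpha}{R_\beta}=0$ and $\sum_\alpha R_\alpha = I_r$, together with $R_\alpha = E_\alpha + O(D^{-1})$, so that (by uniqueness of such a family) they are the operators entering $B_{k\alpha}=(L_W^k R_\alpha)_+$. With this choice two purely algebraic eigen-relations hold. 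Using $\psi_W = K_W\,g$, the definition \eqref{13}, and the elementary fact that $D g = z g$ (since $x=\sum_\gamma t^{(\gamma)}_1$ and $g$ is the diagonal plane wave \eqref{8}), one gets $L_W\psi_W = z\,\psi_W$; and since right multiplication by the constant matrix $E_\alpha$ commutes with $D$ and with the coefficients of $K_W$, one gets $R_\alpha\psi_W = \psi_W E_\alpha$. Both are formal identities in $z$ requiring no geometry.

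The geometric input enters through the fixed subspace $W$. By construction each row of $\psi_W(g,z)$ lies in $W$ for every $g\in\Gamma_{+}(r)^{W}$; since $W$ is closed and $\psi_W$ depends smoothly on the times $\bar{\bm{t}}$ (which enter only through $g$), the rows of every derivative $\partial_{k\alpha}\psi_W$ and, iterating, of every $\partial_x^{j}\psi_W$ again lie in $W$. As $B_{k\alpha}$ is a matrix differential operator in $D=\partial_x$ acting by differentiation and by left multiplication of its coefficients, and $W$ is a linear subspace, the rows of $B_{k\alpha}\psi_W$ lie in $W$ as well. Hence the difference $\Delta := \partial_{k\alpha}\psi_W - B_{k\alpha}\psi_W$ has all rows in $W$.

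The comparison is then closed by an asymptotic estimate together with transversality. Writing $B_{k\alpha} = L_W^k R_\alpha - (L_W^k R_\alpha)_-$ and using the eigen-relations gives $B_{k\alpha}\psi_W = z^k\psi_W E_\alpha - (L_W^kR_\alpha)_-\psi_W$, while differentiating $\psi_W = K_W g$ and using $\partial_{k\alpha}g = z^k E_\alpha g$ gives $\partial_{k\alpha}\psi_W = z^k\psi_W E_\alpha + (\partial_{k\alpha}K_W)g$. Subtracting, the $z^k\psi_W E_\alpha$ terms cancel and
\[ \Delta = \bigl((\partial_{k\alpha}K_W) + (L_W^kR_\alpha)_- K_W\bigr)\,g, \]
where the operator in brackets has order $\le -1$; applied to $g$ (on which $D^{-i}$ acts as $z^{-i}$) it produces a function of the form $(O(z^{-1}))\,g$. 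Thus $\Delta g^{-1}$ has all rows in $H^{(r)}_-$, while $\Delta$ has all rows in $W$, so each row of $\Delta g^{-1}$ lies in $Wg^{-1}\cap H^{(r)}_-$. Transversality of $Wg^{-1}$ means that $\rist{\pi_+}{Wg^{-1}}$ is an isomorphism, whence $Wg^{-1}\cap H^{(r)}_- = 0$ and therefore $\Delta = 0$, proving the key identity.

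Finally I would translate the identity back into operator form. Since $\Delta = Qg$ with $Q = \partial_{k\alpha}K_W + (L_W^kR_\alpha)_- K_W$ of order $\le -1$, and $g$ is invertible, $\Delta = 0$ forces $Q=0$, i.e.\ the Sato equation $\partial_{k\alpha}K_W = -(L_W^kR_\alpha)_- K_W$. Differentiating \eqref{13} and substituting this gives $\partial_{k\alpha}L_W = \comm{M}{L_W}$ with $M = -(L_W^kR_\alpha)_-$; since $\comm{L_W}{R_\alpha}=0$ implies $\comm{L_W^kR_\alpha}{L_W}=0$, the right-hand side equals $\comm{(L_W^kR_\alpha)_+}{L_W} = \comm{B_{k\alpha}}{L_W}$, which is precisely \eqref{5}. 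I expect the main obstacle to be analytic rather than algebraic: justifying that $\psi_W$ is differentiable in the times as a $W$-valued map, so that its derivatives remain in the closed subspace $W$, and controlling the orders in the two asymptotic expansions rigorously. These are exactly the points where one must invoke the functional-analytic framework of \cite{sw85} rather than purely formal manipulation.
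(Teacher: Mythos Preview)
The paper does not actually prove this theorem; it only states the result and attributes it to \cite{dick93}. Your outline is the standard argument and is essentially correct: the eigen-relations $L_W\psi_W=z\psi_W$ and $R_\alpha\psi_W=\psi_W E_\alpha$, together with the fact that the rows of $\psi_W$ (and hence of its time- and $x$-derivatives) remain in the fixed subspace $W$, force via transversality the vanishing of $\Delta$ and hence Sato's equation $\partial_{k\alpha}K_W=-(L_W^kR_\alpha)_-K_W$, from which \eqref{5} follows by the commutator manipulation you describe. This is precisely the line of reasoning in the cited reference (and, in the one-component case, in \cite{sw85}). The only points requiring care are those you already flag: smooth dependence of $\psi_W$ on the times so that derivatives stay in the closed subspace $W$, and the rigorous control of orders in the asymptotic expansions; these are handled by the analytic framework of \cite{sw85} rather than by formal manipulation alone.
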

We remark that the correspondence \eqref{13} is not one-to-one: if \(K' = KC\)
where \(C = I_{r} + \sum_{j\geq 1} C_{j} D^{-j}\) for some family
\((C_{j})_{j\geq 1}\) of constant diagonal matrices then \(L=L'\). At the
level of \(\Gr(r)\) this ``gauge freedom'' is expressed by the action of the
group \(\Gamma_{-}(r)\) consisting of diagonal \(r\times r\) matrices of the
form \(\diag(h_{1}, \dots, h_{r})\), where each \(h_{\alpha}\) belongs to the
group (denoted \(\Gamma_{-}\) in \cite{sw85}) of analytic functions
\(\app{h}{S^{1}}{\bb{C}^{*}}\) that extend to holomorphic functions on the
disc \(\set{z\in \CP | \abs{z}\geq R}\) and such that \(h(\infty)=1\).

Another way to describe the Baker function associated to a point of \(\Gr(r)\)
relies on the so-called \emph{tau function}. Here we do not need to enter into
the details of its definition (see again \cite{dick93}); the key fact is that
to each subspace \(W\in \Gr(r)\) we can associate certain holomorphic
functions on \(\Gamma_{+}(r)\), denoted \(\tau_{W}\) and
\(\tau_{W\alpha\beta}\) for each pair of indices \(\alpha\neq \beta\),
determined up to constant factors, such that the following equality (``Sato's
formula'') holds:
\begin{equation}
  \label{eq:15}
  \tilde{\psi}_{W}(g,z)_{\alpha\beta} =
  \begin{cases}
    \displaystyle
    \frac{\tau_{W}(gq_{z\alpha})}{\tau_{W}(g)} &\text{ if } \alpha=\beta\\
    \displaystyle
    z^{-1} \frac{\tau_{W\alpha\beta}(gq_{z\beta})}{\tau_{W}(g)} &\text{ if }
    \alpha\neq\beta
  \end{cases}
\end{equation}
where for each \(z\in \bb{C}\) and \(\alpha=1,\dots,r\) we define
\(q_{z\alpha}\) to be the element of \(\Gamma_{+}(r)\) that has
\(q_{z}(\zeta)\deq 1 - \frac{\zeta}{z}\) at the \((\alpha,\alpha)\) entry
and \(1\) elsewhere on the diagonal.

\section{The multicomponent rational Grassmannian}
\label{sec:3}

\subsection{Definition}

For the sake of brevity we will denote by \(\mathcal{R}\) the space of
rational functions on the complex projective line \(\CP\), by \(\mathcal{P}\)
the subspace of polynomials and by \(\mathcal{R}_{-}\) the subspace of proper
(i.e., vanishing at infinity) rational functions. We consider the space
\(\mathcal{R}^{r}\) of \(r\)-tuples of rational functions with the direct sum
decomposition
\begin{equation}
  \label{eq:dec-Rm}
  \mathcal{R}^{r} = \mathcal{P}^{r}\oplus \mathcal{R}_{-}^{r}
\end{equation}
and associated canonical projection maps \(\app{\pi_{+}}{\mathcal{R}^{r}}
{\mathcal{P}^{r}}\) and \(\app{\pi_{-}}{\mathcal{R}^{r}}{\mathcal{R}_{-}^{r}}\).

We define the Grassmannian \(\oGrrat(r)\) as the set of closed linear
subspaces \(W\subseteq \mathcal{R}^{r}\) for which there exist polynomials
\(p,q\in \mathcal{P}\) such that
\begin{equation}
  \label{eq:def-oGrrat-m}
  p\mathcal{P}^{r}\subseteq W\subseteq q^{-1}\mathcal{P}^{r}
\end{equation}
The \emph{virtual dimension} of \(W\), denoted \(\vdim W\), is the index of
the (Fredholm) operator \(p_{+}\deq \rist{\pi_{+}}{W}\); one has
\begin{equation}
  \label{eq:vdim-W}
  \vdim W = \dim W' - r \deg p
\end{equation}
where \(W'\deq \quoz{W}{p\mathcal{P}^{r}}\) is finite-dimensional. We denote
by \(\Grrat(r)\) the subset of \(\oGrrat(r)\) consisting of subspaces of
virtual dimension zero.

This space can be embedded in the \(r\)-component Segal-Wilson Grassmannian
\(\Gr(r)\) by the following procedure: given \(W\in \Grrat(r)\), we choose the
radius \(R\in \bb{R}^{+}\) involved in the definition of \(\Gr(r)\) such that
every root of the polynomial \(q\) appearing in \eqref{def-oGrrat-m} is
contained in the open disc \(\abs{z} < R\); then the restrictions
\(\rist{f}{\gamma_{R}}\) for all \(f\in W\) determine a linear subspace whose
\(L^{2}\)-closure belongs to \(\Gr(r)\). This embedding automatically defines
a topology on \(\Grrat(r)\) and its subspaces by restriction.
\begin{lemma}
  \label{th:vdim-z-mc}
  A subspace \(W\in \oGrrat(r)\) has virtual dimension zero if and only if
  the codimension of the inclusion \(W\subseteq q^{-1}\mathcal{P}^{r}\)
  coincides with \(r\deg q\).
\end{lemma}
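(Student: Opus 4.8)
The plan is to reduce the statement to a single additivity computation along the chain $p\mathcal{P}^{r}\subseteq W\subseteq q^{-1}\mathcal{P}^{r}$ supplied by \eqref{def-oGrrat-m}. First I would record the elementary fact that for nested vector spaces $A\subseteq B\subseteq C$ with $\quoz{C}{A}$ finite-dimensional one has $\dim(\quoz{C}{A}) = \dim(\quoz{B}{A}) + \dim(\quoz{C}{B})$. Applying this with $A = p\mathcal{P}^{r}$, $B = W$ and $C = q^{-1}\mathcal{P}^{r}$ gives
\[
  \dim\bigl(\quoz{q^{-1}\mathcal{P}^{r}}{p\mathcal{P}^{r}}\bigr) = \dim W' + \codim\bigl(W\subseteq q^{-1}\mathcal{P}^{r}\bigr),
\]
where $W' = \quoz{W}{p\mathcal{P}^{r}}$ as in \eqref{vdim-W} and the second summand on the right is exactly the codimension appearing in the statement.

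The key computation is then the left-hand side. Multiplication by $q$ is a linear isomorphism $q^{-1}\mathcal{P}^{r}\to \mathcal{P}^{r}$ that carries the subspace $p\mathcal{P}^{r}$ onto $pq\mathcal{P}^{r}$; hence $\quoz{q^{-1}\mathcal{P}^{r}}{p\mathcal{P}^{r}}\isom \quoz{\mathcal{P}^{r}}{pq\mathcal{P}^{r}}$, whose dimension is $r\deg(pq) = r(\deg p + \deg q)$. In particular this quotient is finite-dimensional, which retroactively justifies the additivity used above. Substituting into the displayed identity yields
\[
  \codim\bigl(W\subseteq q^{-1}\mathcal{P}^{r}\bigr) = r\deg p + r\deg q - \dim W'.
\]

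Finally I would invoke the formula $\vdim W = \dim W' - r\deg p$ from \eqref{vdim-W} to replace $\dim W'$ by $\vdim W + r\deg p$, which collapses the expression to $\codim(W\subseteq q^{-1}\mathcal{P}^{r}) = r\deg q - \vdim W$. The claimed equivalence — that this codimension equals $r\deg q$ precisely when $\vdim W = 0$ — is then immediate. There is no serious obstacle here; the one step deserving care, which I regard as the crux of the argument, is the dimension count for $\quoz{q^{-1}\mathcal{P}^{r}}{p\mathcal{P}^{r}}$, where one must verify that multiplication by $q$ genuinely identifies the nested pair of subspaces as stated and that $\deg(pq) = \deg p + \deg q$ (valid since $p$ and $q$ are nonzero) correctly tallies the dimension of each of the $r$ components.
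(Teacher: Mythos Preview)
Your proof is correct and follows essentially the same route as the paper: both arguments reduce to the identity $\codim_{q^{-1}\mathcal{P}^{r}} W = r(\deg p + \deg q) - \dim W'$ and then invoke \eqref{vdim-W}. The only cosmetic difference is that the paper first multiplies the whole chain by $q$ and applies the third isomorphism theorem inside $\mathcal{P}^{r}$, whereas you stay in $q^{-1}\mathcal{P}^{r}$ and use additivity of dimensions along the chain; these are equivalent bookkeeping choices.
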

\begin{proof}
  Condition \eqref{def-oGrrat-m} may be rewritten as \(qp\mathcal{P}^{r}
  \subseteq qW\subseteq \mathcal{P}^{r}\), so that the codimension of \(W\) in
  \(q^{-1}\mathcal{P}^{r}\) is the same as the codimension of \(qW\) in
  \(\mathcal{P}^{r}\), namely \(\dim \quoz{\mathcal{P}^{r}}{qW}\). Taking the
  quotient of both those spaces by the common subspace \(qp\mathcal{P}^{r}\)
  we get an isomorphic linear space which is the quotient of two
  finite-dimensional spaces:
  \[ \frac{\mathcal{P}^{r}}{qW}\isom \frac{\quoz{\mathcal{P}^{r}}
    {qp\mathcal{P}^{r}}}{\quoz{qW}{qp\mathcal{P}^{r}}} \]
  Moreover, \(\quoz{qW}{qp\mathcal{P}^{r}}\isom \quoz{W}{p\mathcal{P}^{r}} =
  W'\), so that
  \[ \codim_{q^{-1}\mathcal{P}^{r}} W = r(\deg q + \deg p) - \dim W' \]
  By using \eqref{vdim-W} we finally get
  \begin{equation}
    \label{eq:codim-W}
    \codim_{q^{-1}\mathcal{P}^{r}} W = r \deg q - \vdim W
  \end{equation}
  from which the lemma follows.
\end{proof}
Let's introduce a more algebraic description for \(\oGrrat(r)\) in the same
vein as \cite{wils93}. For every \(k=1, \dots, r\), \(s\in \bb{N}\) and
\(\lambda\in \bb{C}\) we define the linear functional \(\eval_{k,s,\lambda}\)
on \(\mathcal{P}^{r}\) by
\[ \pair{\eval_{k,s,\lambda}}{(p_{1}, \dots, p_{r})} =
p_{k}^{(s)}(\lambda) \]
These functionals are easily seen to be linearly independent; we denote by
\(\mathscr{C}^{(r)}\) the linear space they generate, and think of it as a
space of ``differential conditions'' we can impose on \(r\)-tuples of
polynomials. We also set
\[ \mathscr{C}^{(r)}_{\lambda}\deq \linspan \{\eval_{k,s,\lambda}\}_{1\leq
  k\leq r,\, s\in \bb{N}} \]
and
\[ \mathscr{C}^{(r)}_{t,\lambda}\deq \linspan \{\eval_{k,s,\lambda}\}_{1\leq
  k\leq r,\, 0\leq s < t} \]
with the convention that \(\mathscr{C}^{(r)}_{0,\lambda} = \{0\}\).

Given \(c\in \mathscr{C}^{(r)}\), the finite set of points \(\lambda\in
\bb{C}\) such that the projection of \(c\) on \(\mathscr{C}^{(r)}_{\lambda}\)
is nonzero will be called the \emph{support} of \(c\). For every linear
subspace \(C\subseteq \mathscr{C}^{(r)}\), its annihilator
\[ V_{C}\deq \set{(p_{1}, \dots, p_{r})\in \mathcal{P}^{r} | \pair{c}{(p_{1},
    \dots, p_{r})} = 0 \text{ for all } c\in C} \]
is a linear subspace in \(\mathcal{P}^{r}\).
\begin{lemma}
  \label{th:dd-grrat-m}
  A subspace \(W\subseteq \mathcal{R}^{r}\) belongs to \(\oGrrat(r)\) if and
  only if there exist a finite-dimensional subspace \(C\subseteq
  \mathscr{C}^{(r)}\) and a polynomial \(q\) such that \(W = q^{-1}V_{C}\);
  moreover \(W\in \Grrat(r)\) if and only if \(r\deg q = \dim C\).
\end{lemma}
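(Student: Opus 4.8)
The plan is to reduce both implications to a duality between finite-codimensional subspaces of \(\mathcal{P}^{r}\) and finite-dimensional spaces of differential conditions, and then to read off the virtual dimension from a codimension count. For the reverse implication of the first assertion, suppose \(W = q^{-1}V_{C}\) with \(C\subseteq\mathscr{C}^{(r)}\) finite-dimensional. Since \(V_{C}\subseteq\mathcal{P}^{r}\) by definition, the inclusion \(W\subseteq q^{-1}\mathcal{P}^{r}\) is immediate. For the lower bound I would exploit that a finite-dimensional \(C\) has finite support \(S\subseteq\bb{C}\) and involves derivatives only up to some maximal order: choosing \(p = \prod_{\lambda\in S}(z-\lambda)^{t_{\lambda}}\) with each \(t_{\lambda}\) large enough that every \(\eval_{k,s,\lambda}\) occurring in \(C\) has \(s < t_{\lambda}\), one checks that \((qp\,g_{k})^{(s)}(\lambda)=0\) for all \(g\in\mathcal{P}^{r}\), hence \(qp\,\mathcal{P}^{r}\subseteq V_{C}\), i.e. \(p\,\mathcal{P}^{r}\subseteq W\). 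This establishes \eqref{def-oGrrat-m} and shows \(W\in\oGrrat(r)\).

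For the forward implication, given \(W\in\oGrrat(r)\) with \(p\,\mathcal{P}^{r}\subseteq W\subseteq q^{-1}\mathcal{P}^{r}\), I set \(V\deq qW\), so that \(qp\,\mathcal{P}^{r}\subseteq V\subseteq\mathcal{P}^{r}\) and \(W=q^{-1}V\); it then suffices to realise \(V\) as an annihilator \(V_{C}\). The key step, which I expect to be the main obstacle, is the identification of the dual of the finite-dimensional quotient \(\mathcal{P}^{r}/(qp\,\mathcal{P}^{r})\) with the span of those \(\eval_{k,s,\lambda}\) for which \(\lambda\) is a root of \(qp\) and \(s\) lies below its multiplicity: these functionals manifestly vanish on \(qp\,\mathcal{P}^{r}\) (a zero of order \(\mu\) kills the first \(\mu\) Taylor coefficients), they are linearly independent, and their number \(r\deg(qp)\) matches \(\dim\bigl(\mathcal{P}^{r}/(qp\,\mathcal{P}^{r})\bigr)\), so they form a basis of the dual. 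Granting this, finite-dimensional biduality applied to the image \(\bar{V}\) of \(V\) in \(\mathcal{P}^{r}/(qp\,\mathcal{P}^{r})\) yields \(C\deq\bar{V}^{\perp}\subseteq\mathscr{C}^{(r)}\) with \(V = V_{C}\), whence \(W=q^{-1}V_{C}\) with \(C\) finite-dimensional.

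For the ``moreover'' clause I would compute the codimension of \(W\) directly. Multiplication by \(q\) is a linear isomorphism \(q^{-1}\mathcal{P}^{r}\to\mathcal{P}^{r}\) carrying \(W\) onto \(V_{C}\), so \(\codim_{q^{-1}\mathcal{P}^{r}}W = \codim_{\mathcal{P}^{r}}V_{C}\); and since \(C\) is spanned by linearly independent functionals, the evaluation map \(\mathcal{P}^{r}\to\bb{C}^{\dim C}\) they define is surjective with kernel \(V_{C}\), giving \(\codim_{\mathcal{P}^{r}}V_{C}=\dim C\). Comparing with \eqref{codim-W} yields \(\vdim W = r\deg q-\dim C\); since \(\Grrat(r)\) consists of the subspaces of virtual dimension zero (equivalently, by Lemma \ref{th:vdim-z-mc}), the condition \(W\in\Grrat(r)\) is precisely \(r\deg q=\dim C\), as claimed.
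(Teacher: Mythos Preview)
Your proof is correct and follows essentially the same route as the paper's: both directions are handled via the duality between finite-codimensional subspaces of \(\mathcal{P}^{r}\) and finite-dimensional subspaces of \(\mathscr{C}^{(r)}\), with the polynomial \(p\) built from the support of \(C\), and the ``moreover'' clause read off from the codimension formula \eqref{codim-W} of Lemma~\ref{th:vdim-z-mc}. The only cosmetic difference is that you realise the dual of \(\mathcal{P}^{r}/(qp\,\mathcal{P}^{r})\) via the functionals \(\eval_{k,s,\lambda}\) at the roots of \(qp\), whereas the paper uses the Taylor-coefficient functionals \(\tfrac{1}{s!}\eval_{k,s,0}\); either basis works, and your surjectivity argument for \(\codim_{\mathcal{P}^{r}}V_{C}=\dim C\) is slightly more streamlined than the paper's dimension count through \(W'\).
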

\begin{proof}
  Let's suppose that \(W = q^{-1}V_{C}\) for some finite-dimensional subspace
  \(C\) in \(\mathscr{C}^{(r)}\) and let \(\{\lambda_{1}, \dots,
  \lambda_{m}\}\) be the support of \(C\). For every \(i=1, \dots, m\) let
  \(t_{i}\) be the maximum value of \(s\) for the functionals
  \(\eval_{k,s,\lambda_{i}}\) involved in the elements of \(C\). By letting
  \(p\deq \prod_{i=1}^{m} (z-\lambda_{i})^{t_{i}+1}\) it follows that
  \(p\mathcal{P}^{r}\subseteq V_{C}\), hence a fortiori
  \(pq\mathcal{P}^{r}\subseteq V_{C}\subseteq \mathcal{P}^{r}\) and dividing
  by \(q\) we see that \(W\in \oGrrat(r)\). Now, if \(r\deg q = \dim C\) then
  \(\codim_{\mathcal{P}^{r}} V_{C} = \dim C = r\deg q\) and lemma
  \ref{th:vdim-z-mc} implies that \(W\in \Grrat(r)\).

  For the converse, let \(W\in \oGrrat(r)\). Then there exist \(p,q\in
  \mathcal{P}\) such that \(qp\mathcal{P}^{r}\subseteq qW\subseteq
  \mathcal{P}^{r}\); this means in particular that the linear space \(qW\) is
  obtained by imposing a certain (finite) number of linearly independent
  conditions in the dual space of
  \(\quoz{\mathcal{P}^{r}}{qp\mathcal{P}^{r}}\). The latter can be identified
  with \(\bb{C}^{r}\otimes U\), where \(U\) is the linear space of polynomials
  with degree less than \(\deg p + \deg q\), so that \(qW\) is determined by a
  finite-dimensional subspace in \((\bb{C}^{r}\otimes U)^{*}\). On the other
  hand, this space is generated by the elements of \(\mathscr{C}^{(r)}\)
  (e.g. using the functionals \(\frac{1}{s!} \eval_{k,s,0}\) that extract the
  \(s\)-th coefficient of the \(k\)-th polynomial). Thus, there exists a
  linear subspace \(C\subseteq \mathscr{C}^{(r)}\) of finite dimension such
  that \(V_{C} = qW\). If moreover \(\vdim W = 0\), then by \eqref{vdim-W} the
  linear space \(W'\isom \quoz{qW}{qp\mathcal{P}}\) has dimension \(r\deg p\),
  so that it must be defined by \(r\deg p\) linearly independent conditions.
  It follows that the subspace \(C\) has dimension \(r\deg q\).
\end{proof}
In the sequel, the subspace \(q^{-1}V_{C}\) singled out by this lemma will be
denoted simply by \((C,q)^{*}\).
\begin{lemma}
  \label{th:Gamma-r}
  Two subspaces \(W_{1} = (C,q_{1})^{*}\) and \(W_{2} = (C,q_{2})^{*}\)
  determined by the same conditions space \(C\) in \(\Grrat(r)\) lie in the
  same \(\Gamma_{-}(r)\)-orbit.
\end{lemma}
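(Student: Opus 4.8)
The plan is to write down an explicit element of $\Gamma_-(r)$ that carries $W_2$ onto $W_1$. Since the two points share the same conditions space $C$, the only difference between $W_1 = q_1^{-1}V_C$ and $W_2 = q_2^{-1}V_C$ is the polynomial factor, and this factor ought to be absorbable into a (scalar) diagonal loop. First I would record that, because both $W_1$ and $W_2$ lie in $\Grrat(r)$ with the same $C$, Lemma \ref{th:dd-grrat-m} forces $r\deg q_1 = \dim C = r\deg q_2$, hence $\deg q_1 = \deg q_2$. Since rescaling $q$ by a nonzero constant leaves $q^{-1}V_C$ unchanged (as $V_C$ is a linear subspace, so $(cq)^{-1}V_C = q^{-1}V_C$), I may assume without loss of generality that both $q_1$ and $q_2$ are monic.

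Next I would propose the candidate $h \deq q_2/q_1$ together with the scalar matrix $\diag(h,\dots,h)\in\LGL(r,\mathbb{C})$, and verify the orbit relation by direct computation. Writing a generic element of $W_2$ as $q_2^{-1}v$ with $v\in V_C$, right multiplication by $\diag(h,\dots,h) = h\,I_r$ sends it to $h\,q_2^{-1}v = (q_2/q_1)\,q_2^{-1}v = q_1^{-1}v$; as $v$ ranges over $V_C$ this sweeps out exactly $q_1^{-1}V_C = W_1$. Thus the algebraic identity $W_2\cdot\diag(h,\dots,h) = W_1$ holds in $\mathcal{R}^r$, and since $\diag(h,\dots,h)$ is precisely the bounded invertible multiplication operator on $H^{(r)}$ that implements the $\Gamma_-(r)$-action on $\Gr(r)$, it carries the $L^2$-closure of $W_2$ onto that of $W_1$; the relation therefore descends to the embedded images in $\Gr(r)$.

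The only genuine point to check — and where I expect the main (if modest) obstacle to sit — is that $\diag(h,\dots,h)$ really belongs to $\Gamma_-(r)$, i.e. that $h = q_2/q_1\in\Gamma_-$. Here I would exploit the freedom in the radius $R$ entering the embedding of $\Grrat(r)$ into $\Gr(r)$: choosing $R$ large enough that every root of $q_1$ and of $q_2$ lies in the open disc $\abs{z} < R$, the function $h$ is holomorphic and nowhere vanishing on $\abs{z}\geq R$ (its poles are the roots of $q_1$ and its zeros the roots of $q_2$, all now excluded), while $h(\infty) = 1$ because $q_1$ and $q_2$ are monic of equal degree. These are exactly the defining conditions for $\Gamma_-$, so $\diag(h,\dots,h)\in\Gamma_-(r)$ and $W_1,W_2$ indeed lie in a common $\Gamma_-(r)$-orbit.
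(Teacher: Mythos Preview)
Your proof is correct and follows essentially the same route as the paper: the paper simply observes that the scalar matrix \(\eta = \frac{q_{1}}{q_{2}} I_{r}\) lies in \(\Gamma_{-}(r)\) (since \(q_{1},q_{2}\) have equal degree) and satisfies \(W_{1}\eta = W_{2}\). Your version is the same argument written in the reverse direction with \(h=q_{2}/q_{1}\), and you supply more detail (monic normalisation, the choice of \(R\), and the verification of \(h(\infty)=1\)) than the paper's one-line remark.
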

Actually, the matrix \(\eta\deq \frac{q_{1}}{q_{2}} I_{r}\) belongs to
\(\Gamma_{-}(r)\) (notice that \(q_{1}\) and \(q_{2}\) are both of degree
\(r\dim C\)) and is such that \(W_{1}\eta = W_{2}\).

We say that a finite-dimensional subspace \(C\subseteq \mathscr{C}^{(r)}\) is
\emph{homogeneous} if it admits a basis consisting of ``1-point conditions'',
i.e. differential conditions each one involving a single point:
\begin{equation}
  \label{eq:def-homC}
  C = \bigoplus_{\lambda\in \bb{C}} C_{\lambda} \quad\text{ where }\quad
  C_{\lambda}\deq C\cap \mathscr{C}^{(r)}_{\lambda}
\end{equation}
We denote by \(\Grhom(r)\) the set of subspaces \((C,q)^{*}\in \Grrat(r)\)
such that \(C\) is homogeneous, \(\dim C_{\lambda} = rn_{\lambda}\) for some
natural numbers \(n_{\lambda}\in \bb{N}\) and \(q=q_{C}\), where
\begin{equation}
  \label{eq:def-qC}
  q_{C}\deq \prod_{\lambda\in \bb{C}} (z-\lambda)^{n_{\lambda}}
\end{equation}

\subsection{The Baker and tau functions}

We would now like to determine the Baker function of a point \((C,q)^{*}\in
\Grrat(r)\). By Lemma \ref{th:Gamma-r} it is enough to consider the case
\(q=z^{d}\). Thus, we suppose that \(\dim C = rd\) and take the subspace \(W =
(C,z^{d})^{*}\); we claim that \(\psi_{W}\) must have the form
\begin{equation}
  \label{eq:fB-Grrat-m-1}
  \psi_{W}(g,z) = \biggl( I_{r} + \sum_{j=1}^{d} W_{j}(g)z^{-j}\biggr) g(z)
\end{equation}
Indeed, by standard arguments (cfr. \cite[Sect. 4]{wils93}), on the one hand
we have that each row of the matrix-valued function \(z^{d}\psi_{W}(g,z)\)
(for every fixed \(g\)) belongs to the \(L^{2}\)-closure of \(V_{C}\) in
\(H^{(r)}_{+}\) (for some value of the radius \(R\)), and on the other hand
that each functional \(\eval_{k,s,\lambda}\) extends uniquely to a continuous
functional on \(H^{(r)}_{+}\); hence \((z^{d}\psi_{W})_{\alpha}\in V_{C}\) for
every \(\alpha\). Now,
\[ (z^{d}\psi_{W})_{\alpha\beta} = \biggl( z^{d}\delta_{\alpha\beta} +
  \sum_{j\geq 1} W_{j\alpha\beta}(g) z^{d-j}\biggr) g_{\beta}(z) \]
and if we require every matrix element to be a polynomial, we see that every
\(W_{j}\) with \(j>d\) must be the zero matrix.

To determine the matrices \(W_{1}, \dots, W_{d}\), let \((c_{1}, \dots,
c_{rd})\) be a basis for \(C\); for each \(\alpha\) we take the \(\alpha\)-th
row of \(z^{d}\psi_{W}\) and impose the equalities \(\pair{c_{i}}{(z^{d}
  \psi_{W})_{\alpha}} = 0\) (with \(i=1,\dots,rd\)). This yields the following
linear system of equations:
\begin{equation}
  \label{eq:fB-Grrat-m-2}
  \pair{c_{i}}{\biggl( z^{d}\delta_{\alpha\beta} + \sum_{j=1}^{d}
    W_{j\alpha\beta}(g) z^{d-j}\biggr) g_{\beta}(z)} = 0
\end{equation}
In other words, we have a family of \(r\) linear systems, each of which
involves \(rd\) equations, for a total of \(r^{2}d\) scalar equations. The
unknowns are of course the \(r^{2}\) entries of the \(d\) matrices
\(\{W_{1}, \dots, W_{d}\}\); the coefficients of these unknowns involve, as
in the scalar case, the \(g_{\beta}\)'s and their derivatives evaluated at
the points in the support of \(C\).

The tau functions associated to \(W = (C,z^{d})^{*}\in \Grrat(r)\) are readily
obtained by imitating the calculations in \cite{dick93}; they are built out of
the following family of \((rd+1)\times (rd+1)\) matrices (indexed by
\(\alpha,\beta = 1, \dots, r\)):
\begin{equation}
  \label{eq:def-Mab}
  M_{\alpha\beta}\deq
  \begin{pmatrix}
    \pair{c_{1}}{g_{1}} & \hdots & \pair{c_{rd}}{g_{1}} & 0\\
    \vdots& & \vdots & z^{-d}\\
    \pair{c_{1}}{g_{r}} & \hdots & \pair{c_{rd}}{g_{r}} & 0\\
    \pair{c_{1}}{zg_{1}} & \hdots & \pair{c_{rd}}{zg_{1}} & 0\\
    \vdots& & \vdots & z^{1-d}\\
    \pair{c_{1}}{zg_{r}} & \hdots & \pair{c_{rd}}{zg_{r}} & 0\\
    \vdots&\vdots&\vdots&\vdots\\
    \pair{c_{1}}{z^{d-1}g_{1}} & \hdots & \pair{c_{rd}}{z^{d-1}g_{1}} & 0\\
    \vdots& & \vdots & z^{-1}\\
    \pair{c_{1}}{z^{d-1}g_{r}} & \hdots & \pair{c_{rd}}{z^{d-1}g_{r}} & 0\\
    \pair{c_{1}}{z^{d}g_{\alpha}} & \hdots & \pair{c_{rd}}{z^{d}g_{\alpha}}
    & \delta_{\alpha\beta}
  \end{pmatrix}
\end{equation}
where in the last column the only nonzero element is on the \(\beta\)-th row
of each block. Notice that in an expression like \(\pair{c_{i}}{g_{\gamma}}\),
\(g_{\gamma}\) must be interpreted as the row vector having \(g_{\gamma}\) in
its \(\gamma\)-th entry and zero elsewhere.

In terms of these matrices, the ``diagonal'' tau function \(\tau_{W}\) is
simply the cofactor of the element \(\delta_{\alpha\beta} = 1\) in the lower
right corner, or equivalently the determinant of the \(rd\times rd\) minor
obtained by deleting the last column and the last row:
\begin{equation}
  \label{eq:24}
  \tau_{W}(g) = \det (\pair{c_{i}}{z^{j-1}g_{\gamma}})_{\substack{
      i=1\dots rd\\
      j=1\dots d,\, \gamma=1\dots r}}
\end{equation}
This is just the matrix of coefficients of the linear system
\eqref{fB-Grrat-m-2}, so that the system has a solution exactly when \(g\in
\Gamma_{+}(r)^{W}\), as expected. The ``off-diagonal'' tau function
\(\tau_{W\alpha\beta}\) (with \(\alpha\neq \beta\)) is the cofactor of the
element \(z^{-1}\) in the last column:
\begin{equation}
  \label{eq:ftauab-Grrat}
  \tau_{W\alpha\beta}(g) = (-1)^{r-\beta} \det (\pair{c_{i}}{z^{j-1}
    (g_{\gamma} + \delta_{jd}\delta_{\gamma\beta} (zg_{\alpha} -
    g_{\gamma}))})_{\substack{
      i=1\dots rd\\
      j=1\dots d,\, \gamma=1\dots r}}
\end{equation}
Observe that an expression such as \(z^{j-1} g_{\gamma}\) may equivalently be
read as \(\partial_{1\gamma}^{j-1} g_{\gamma}\); this will be useful in what
follows.

\section{Rational solutions}
\label{sec:4}

We can now prove a rationality result for the solutions of the mcKP hierarchy
coming from subspaces in \(\Grhom(r)\) defined by a set of condition whose
support is a single point.
\begin{theorem}
  \label{th:mcKP}
  Let \(W\in \Grhom(r)\) and suppose that \(W = (C,q_{C})^{*}\) with \(C\)
  supported on a single point \(\lambda\in \bb{C}\). Then:
  \begin{enumerate}
  \item Each diagonal entry of the reduced Baker function \(\tilde{\psi}_{W}\)
    is a rational function of the times \(t_{1}^{(1)}, \dots, t_{1}^{(r)}\)
    that tends to \(1\) as \(t_{1}^{(\alpha)}\rightarrow \infty\) for any
    \(\alpha\);
  \item The tau function \(\tau_{W}\) is a polynomial in \(t_{1}^{(1)}, \dots,
    t_{1}^{(r)}\) with constant leading coefficient.
  \end{enumerate}
\end{theorem}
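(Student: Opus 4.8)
The plan is to feed the explicit determinantal formula \eqref{24} into the one--point, $t_1$--only setting and read off the answers. By Lemma \ref{th:Gamma-r} I may replace $q_C=(z-\lambda)^{n_\lambda}$ by $q=z^{d}$ with $d=n_\lambda=\dim C/r$, since this only moves $W$ within its $\Gamma_-(r)$--orbit and so changes neither $\tilde\psi_W$ nor $\tau_W$ beyond the allowed constant factor. I then switch off the higher times, setting $t_i^{(\gamma)}=0$ for $i\ge 2$, so that each $g_\gamma$ becomes $g_\gamma(z)=\e^{t_1^{(\gamma)}z}$. Because $C$ is supported at the single point $\lambda$, evaluating $g_\gamma$ (and its derivatives) at $\lambda$ produces an overall scalar $\e^{\lambda d(t_1^{(1)}+\dots+t_1^{(r)})}$ that factors out of the $d$ columns carrying index $\gamma$; keeping track of this prefactor is equivalent to translating $\lambda$ to the origin, so I would simply assume $\lambda=0$ and restore the exponential at the end if desired.

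For part (2) I would substitute $g_\gamma(z)=\e^{t_1^{(\gamma)}z}$ into \eqref{24}. Each entry $\pair{c_i}{z^{j-1}g_\gamma}$ is a linear combination over $s$ of the quantities $\frac{\mathrm d^{s}}{\mathrm dz^{s}}\bigl[z^{j-1}\e^{t_1^{(\gamma)}z}\bigr]$ evaluated at $0$, each of which is the monomial $\tfrac{s!}{(s-j+1)!}\,(t_1^{(\gamma)})^{\,s-j+1}$ (and vanishes for $s<j-1$). Hence every entry of the $rd\times rd$ matrix is a polynomial in the single variable $t_1^{(\gamma)}$ labelling its column, and $\tau_W$, being their determinant, is a polynomial in $t_1^{(1)},\dots,t_1^{(r)}$. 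For general $\lambda$ the identical computation gives $\e^{\lambda d\sum_\gamma t_1^{(\gamma)}}$ times such a polynomial, which is precisely why reducing to $\lambda=0$ is convenient.

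The hard part will be showing that the leading coefficient of this polynomial is a nonzero constant. I would attack it by a degree count: to each column $(j,\gamma)$ I assign the top power of $t_1^{(\gamma)}$ its entries can reach, governed by the largest order $s$ occurring in the $\gamma$--component of the elements of $C$, and I isolate the term of highest total degree in the determinant. Since $(c_1,\dots,c_{rd})$ is a basis of $C$, the coefficient of this top term equals, up to the explicit factorial factors, the determinant of the matrix recording which functionals $\eval_{\gamma,s,0}$ enter the basis; the homogeneity of $C$ together with the linear independence of the $\eval_{\gamma,s,0}$ makes that determinant nonzero. The delicate step is to verify that the top term is a single monomial with constant coefficient rather than a sum of competing contributions, which I expect will require grouping the columns by $\gamma$ and by increasing $j$ and exploiting the triangular pattern of the factorial coefficients.

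For part (1) I would invoke Sato's formula \eqref{15}, giving $\tilde\psi_W(g,z)_{\alpha\alpha}=\tau_W(gq_{z\alpha})/\tau_W(g)$. By part (2) both numerator and denominator are polynomials in the $t_1^{(\gamma)}$ (rational in $z$), so their quotient is a rational function of $t_1^{(1)},\dots,t_1^{(r)}$, the exponential prefactors being identical and cancelling once $\lambda=0$. For the limit, observe that $gq_{z\alpha}$ differs from $g$ only by the additive replacement $t_1^{(\alpha)}\mapsto t_1^{(\alpha)}-1/z$ in the $\alpha$--component; consequently $\tau_W(gq_{z\alpha})$ and $\tau_W(g)$, viewed as polynomials in $t_1^{(\alpha)}$, have the same degree and the same leading coefficient (an additive shift preserving it), whence the ratio tends to $1$ as $t_1^{(\alpha)}\to\infty$, for each $\alpha$.
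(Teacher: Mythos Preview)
Your plan mirrors the paper's, and the determinantal computation for part~(2) is essentially right. But the argument for part~(1) has a genuine gap: the element $q_{z\alpha}$ is \emph{not} a shift in $t_1^{(\alpha)}$ alone. Since $q_z(\zeta)=1-\zeta/z=\exp\bigl(-\sum_{k\ge 1}\zeta^k/(kz^k)\bigr)$, multiplying $g$ by $q_{z\alpha}$ sends $t_k^{(\alpha)}\mapsto t_k^{(\alpha)}-1/(kz^k)$ for \emph{every} $k\ge 1$. Having already set the higher times to zero, you cannot evaluate $\tau_W(gq_{z\alpha})$ from your stationary polynomial by shifting only $t_1^{(\alpha)}$: the substitution reintroduces nonzero $t_2^{(\alpha)},t_3^{(\alpha)},\dots$, and these genuinely enter the determinant whenever $C$ contains conditions of order $\ge 1$. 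A smaller slip in the same vein: passing from $q_C$ to $z^d$ via $\Gamma_-(r)$ multiplies $\tau$ by the function $\prod_\gamma g_\gamma(\lambda)^{-d}$, not by a constant; you effectively notice this when you track the exponential prefactor, but the two remarks are never reconciled.

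The paper circumvents both issues by never interpreting $q_{z\alpha}$ as a coordinate shift. It evaluates $\langle c_i,\, z^{j-1}g_\gamma(1-\delta_{\alpha\gamma}z/\zeta)\rangle = \langle c_i, z^{j-1}g_\gamma\rangle - \delta_{\alpha\gamma}\zeta^{-1}\langle c_i, z^{j}g_\gamma\rangle$ directly, then factors $g_\gamma(\lambda)$ out of every entry by writing $\langle c_i,g_\gamma\rangle=g_\gamma(\lambda)\phi_{i\gamma}$ with $\phi_{i\gamma}$ polynomial in $t_1^{(\gamma)}$. The Sato ratio then becomes a quotient of two determinants with polynomial entries; expanding the numerator by multilinearity, one summand is exactly the denominator and every other summand has some $\phi_{i\alpha}$ replaced by $\partial_{1\alpha}\phi_{i\alpha}$, hence strictly lower degree in $t_1^{(\alpha)}$. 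This simultaneously gives $\tilde\psi_{W\alpha\alpha}\to 1$ and the ``constant leading coefficient'' claim of part~(2), so the degree count you flag as ``the hard part'' never has to be carried out head-on.
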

\begin{proof}
  The two statements are equivalent by virtue of the diagonal part of Sato's
  formula \eqref{15}, so it suffices to prove one of them; we choose the
  first. By hypothesis we have \(W = (C,(z-\lambda)^{d})^{*}\) with
  \(C\subseteq \mathscr{C}^{(r)}_{\lambda}\) of dimension \(rd\); let
  \((c_{1}, \dots, c_{rd})\) be a basis for it. Consider the subspace \(U\deq
  (C,z^{d})^{*}\in \Grrat(r)\); its tau function is given by \eqref{24}. To
  compute the diagonal elements of the corresponding Baker function we use
  Sato's formula for \(U\):
  \begin{equation}
    \label{eq:3:t-Sato}
    \tilde{\psi}_{U\alpha\alpha} = \frac{\tau_{U}(gq_{\zeta\alpha})}{\tau_{U}(g)}
  \end{equation}
  (here we consider \(\zeta\) as a parameter and \(z\) as a variable). Since
  we are only interested in the times with subscript \(1\) we will work in the
  stationary setting, i.e. we put \(t_{k}^{(\alpha)}=0\) for every \(k\geq
  2\), \(\alpha=1\dots m\). Each condition \(c_{i}\) is supported at
  \(\lambda\), hence we can define a family of polynomials
  \(\{\phi_{i\gamma}\}_{i=1\dots rd,\, \gamma=1\dots r}\) (with
  \(\phi_{i\gamma}\) only depending on \(t_{1}^{(\gamma)}\)) by the equation
  \begin{equation}
    \label{eq:3:def-phi}
    \pair{c_{i}}{g_{\gamma}} = g_{\gamma}(\lambda)\phi_{i\gamma}
  \end{equation}
  To apply \eqref{3:t-Sato} we need to know the determinant of the matrices
  \(\partial_{1\gamma}^{j-1} \pair{c_{i}}{g_{\gamma}}\) and
  \(\partial_{1\gamma}^{j-1} \pair{c_{i}}{g_{\gamma}(1 - \delta_{\alpha\gamma}
    \frac{z}{\zeta})}\). As for the first, by using \eqref{3:def-phi} its
  generic element can be written in the form
  \begin{equation}
    \label{eq:3:tmp-1}
    \partial_{1\gamma}^{j-1} \pair{c_{i}}{g_{\gamma}} = \partial_{1\gamma}^{j-1}
    (g_{\gamma}(\lambda) \phi_{i\gamma}) = g_{\gamma}(\lambda)
    (\partial_{1\gamma} + \lambda)^{j-1} \phi_{i\gamma}
  \end{equation}
  For the second matrix we have
  \begin{equation}
    \label{eq:3:tmp-2}
    \partial_{1\gamma}^{j-1} \pair{c_{i}}{g_{\gamma}(1 - \delta_{\alpha\gamma}
    \frac{z}{\zeta})} = \partial_{1\gamma}^{j-1} \pair{c_{i}}{g_{\gamma}}
    - \partial_{1\gamma}^{j-1} \pair{c_{i}}{\delta_{\alpha\gamma} g_{\gamma}
      \frac{z}{\zeta}}
  \end{equation}
  The first term is exactly \eqref{3:tmp-1}, whereas the second is
  \[ \delta_{\alpha\gamma} \partial_{1\gamma}^{j} \pair{c_{i}}{g_{\gamma}}
  \zeta^{-1} = \delta_{\alpha\gamma} \partial_{1\gamma}^{j}
  (g_{\gamma}(\lambda)\phi_{i\gamma}) \zeta^{-1} = 
  \delta_{\alpha\gamma} g_{\gamma}(\lambda) (\partial_{1\gamma} +
  \lambda)^{j} \phi_{i\gamma} \zeta^{-1} \]
  By putting all together, equation \eqref{3:tmp-2} becomes
  \[ g_{\gamma}(\lambda) (\partial_{1\gamma} + \lambda)^{j-1} \bigl(
    \phi_{i\gamma} - \delta_{\alpha\gamma} (\partial_{1\gamma}
    \phi_{i\gamma} + \lambda \phi_{i\gamma}) \zeta^{-1}\bigr) \]
  that we can rewrite as
  \begin{equation}
    \label{eq:3:tmp-3}
    g_{\gamma}(\lambda) (1 - \delta_{\alpha\gamma} \frac{\lambda}{\zeta})
    (\partial_{1\gamma} + \lambda)^{j-1} \bigl( \phi_{i\gamma} - 
    \delta_{\alpha\gamma} \frac{1}{\zeta-\lambda} \partial_{1\gamma}
    \phi_{i\gamma}\bigr)
  \end{equation}
  But \((1 - \delta_{\alpha\gamma} \frac{\lambda}{\zeta}) =
  q_{\zeta\alpha}(\lambda)_{\gamma}\), so plugging \eqref{3:tmp-1} and
  \eqref{3:tmp-3} into \eqref{3:t-Sato} we obtain
  \begin{equation}
    \label{eq:3:fin}
    \tilde{\psi}_{U\alpha\alpha}(g,\zeta) = (q_{\zeta}(\lambda))^{d}
    \frac{\det \left( (\partial_{1\gamma} + \lambda)^{j-1} (\phi_{i\gamma} -
        \delta_{\alpha\gamma} \frac{1}{\zeta-\lambda} \partial_{1\gamma}
        \phi_{i\gamma})\right)}{\det \left( (\partial_{1\gamma} +
        \lambda)^{j-1} \phi_{i\gamma}\right)}
  \end{equation}
  The factor \((q_{\zeta}(\lambda))^{d}\) disappears when we go back from
  \(\psi_{U}\) to \(\psi_{W}\); we are left with the ratio of two determinants
  of matrices with polynomial entries in the times \(t_{1}^{(\gamma)}\), which
  is clearly a rational function. Moreover, if we expand the numerator of
  \eqref{3:fin} by linearity over the sum, we see that the term obtained by
  always choosing \(\phi_{i\gamma}\) exactly reproduces the polynomial at the
  denominator, and all the other terms involve a polynomial which has degree
  strictly lower than \(\tau\) in some \(t_{1}^{(\gamma)}\) (since we replace
  \(\phi_{i\gamma}\) with one of its derivatives); this proves that
  \(\tilde{\psi}_{W\alpha\alpha}\rightarrow 1\) as all the
  \(t_{1}^{(\alpha)}\) tend to infinity.
\end{proof}
We now consider the evolution on \(\Gr(r)\) described by the flows of the
matrix KP hierarchy. Recall from the Introduction that these are given by the
vector fields \(\partial/\partial t_{k}\), where \(t_{k} = \sum_{\gamma}
t_{k}^{(\gamma)}\). Equivalently, we can take the matrix \(g\in \Gamma_{+}(r)\)
to be of the form
\begin{equation}
  \label{eq:3:red-mKP}
  g = \diag(\e^{\xi(r^{-1}\bm{t},z)}, \dots, \e^{\xi(r^{-1}\bm{t},z)})
\end{equation}
where \(\bm{t} = (t_{k})_{k\geq 1}\) (and \(t_{1} = x\)). Let's define
\(\tilde{g}\deq \e^{\xi(r^{-1}\bm{t},z)}\) and \(h\deq \e^{\xi(\bm{t},z)}\)
(so that \(g = \tilde{g} I_{r}\) and \(\tilde{g}^{r} = h\)); then the Baker
and tau functions for the matrix KP hierarchy are naturally expressed in terms
of \(h\) only, since that single function completely controls the flows of the
hierarchy.
\begin{theorem}
  \label{th:mKP}
  Let \(W\in \Grhom(r)\), then:
  \begin{enumerate}
  \item \(\tilde{\psi}_{W}(h,z)\) is a matrix-valued rational function of
    \(x\) that tends to \(I_{r}\) as \(x\rightarrow \infty\);
  \item \(\tau_{W}(h)\) and \(\tau_{W\alpha\beta}(h)\) are polynomial
    functions of \(x\) with constant leading coefficients.
  \end{enumerate}
\end{theorem}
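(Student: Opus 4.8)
The plan is to mimic the single-point computation in the proof of Theorem \ref{th:mcKP}, but with the roles of the two ingredients exchanged: there one support point met the full mcKP flows, whereas here several support points meet the single matrix KP flow. The feature I would exploit is \eqref{3:red-mKP}: in the matrix KP reduction all diagonal entries of \(g\) coincide, \(g_\gamma = \tilde g = \e^{\xi(r^{-1}\bm t, z)}\), so in the stationary regime (the higher times kept as parameters) \(\tilde g(z) = \e^{xz/r}\) and \(\tilde g(\lambda) = \e^{\lambda x/r}\) for every \(\lambda\). Using Lemma \ref{th:Gamma-r} I would first replace \(W = (C,q_C)^*\) by \(U = (C,z^d)^*\) with \(d = \deg q_C\), for which the determinant formulas \eqref{24} and \eqref{ftauab-Grrat} are available; since the connecting matrix \(\eta = (z^d/q_C)I_r\) is a scalar and does not involve the times, this only alters the Baker and tau functions by factors rational in the spectral variable and independent of the times, hence preserves both rationality and polynomiality in \(x\) (exactly as the factor \((q_z(\lambda))^d\) was shed in the proof of Theorem \ref{th:mcKP}). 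Homogeneity of \(C\) lets me choose a basis \((c_1,\dots,c_{rd})\) of one-point conditions, with \(c_i\) supported at a single \(\lambda_{(i)}\).

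The elementary half of the argument is a per-row exponential factorisation. Differentiating \(\tilde g\) in its argument and evaluating at \(\lambda_{(i)}\) gives \(\pair{c_i}{z^{j-1}g_\gamma} = \e^{\lambda_{(i)}x/r}\,P_{ij\gamma}(x)\) with \(P_{ij\gamma}\) a polynomial in \(x\), and the modified entries of \eqref{ftauab-Grrat} behave the same way; crucially, since \(\tilde g\) no longer depends on \(\gamma\), the exponential depends only on the row index \(i\). Pulling these factors out row by row yields \(\tau_W(h) = \e^{cx}\det(P_{ij\gamma})\), and similarly for \(\tau_{W\alpha\beta}(h)\), with \(c = \sum_\lambda n_\lambda\lambda\). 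The prefactor \(\e^{cx}\) is an exponential of a linear function of the times — the inherent tau-function gauge, which cancels in Sato's formula and leaves \(L_W\) unchanged — so discarding it shows \(\tau_W(h)\) and \(\tau_{W\alpha\beta}(h)\) are polynomials in \(x\). Feeding the determinants into the two branches of \eqref{15} then produces, for \(\tilde\psi_W(h,z)\), ratios of the surviving polynomial determinants: the Miwa shift \(g\mapsto gq_{z\beta}\) multiplies one slot of \(g\) by \(q_z(\zeta)=1-\zeta/z\) but leaves each \(\lambda_{(i)}\), and hence each row exponential \(\e^{\lambda_{(i)}x/r}\), untouched, so the common prefactor \(\prod_i\e^{\lambda_{(i)}x/r}\) cancels between numerator and denominator. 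Every entry of \(\tilde\psi_W(h,z)\) is therefore rational in \(x\).

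It remains to control the leading behaviour in \(x\), which yields both the constant leading coefficient in part (2) and the limit \(\tilde\psi_W\to I_r\) in part (1). Here I would analyse the top-degree part of \(\det(P_{ij\gamma})\): its natural candidate, the coefficient of \(x^{\sum_i s_i}\) with \(s_i\) the maximal derivative order in \(c_i\), is the determinant of a matrix whose \(i\)-th row is the rank-one tensor \(\bm a^{(i)}\otimes(\lambda_{(i)}^{\,j-1})_j\), where \(\bm a^{(i)}\) collects the top-order coefficients of \(c_i\). Once the genuine degree and its numerical (time-independent) leading coefficient are pinned down for the denominator \(\tau_W\), expanding the numerator of each diagonal Sato ratio by multilinearity exactly as in Theorem \ref{th:mcKP} shows that the term carrying no extra derivative reproduces the denominator while all others have strictly lower degree, so the diagonal entries tend to \(1\). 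For the off-diagonal entries the corresponding statement is that the top-degree coefficient of \(\tau_{W\alpha\beta}\) is of strictly lower degree in \(x\) than that of \(\tau_W\), which forces those entries to vanish as \(x\to\infty\); together these give \(\tilde\psi_W\to I_r\).

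The routine parts are the exponential factorisation and its cancellation in Sato's ratios. The real obstacle is the degree bookkeeping just sketched, because with several support points the naive top degree \(\sum_i s_i\) is generally \emph{not} attained: confluent conditions (the same point and component at consecutive orders) make the top-degree tensor matrix singular — already for two equal rows one sees this — so the true degree and leading coefficient come from a multi-point confluent Vandermonde whose rank is delicate to control. Establishing that this leading coefficient is a nonzero constant, and that passing to \(\tau_{W\alpha\beta}\) strictly lowers the degree, is where the argument must do its real work; the single-point case of Theorem \ref{th:mcKP} is precisely the situation in which this confluent structure collapses to one point and the leading term can be read off directly.
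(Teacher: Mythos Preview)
Your overall plan matches the paper's: pass from \(W=(C,q_C)^*\) to \(U=(C,z^d)^*\), use the one-point basis afforded by homogeneity to factor exponentials out of the tau determinant column by column, and then read off polynomiality and rationality via Sato's formula. But one step is wrong as stated: the \(\Gamma_-(r)\) passage does \emph{not} alter the tau function only by factors ``independent of the times''. One has \(\tau_W=\hat\eta\,\tau_U\) with \(\hat\eta=\prod_i h(\lambda_i)^{-1}\), an exponential in the times (this is \eqref{3:def-hatg}). The paper's point is that \(\hat\eta\) cancels \emph{exactly} against the factor \(\prod_i\tilde g(\lambda_i)^r=\prod_i h(\lambda_i)\) pulled out of the determinant for \(\tau_U\), so that \(\tau_W\) itself --- not merely some gauge-equivalent representative --- equals the polynomial determinant \eqref{3:tau-W-mKP}. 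Your later appeal to ``gauge'' is in the right spirit, but without identifying the specific \(\hat\eta\) attached to this specific \(\eta\) it does not establish the claim about \(\tau_W\) as opposed to \(\tau_U\).

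For the off-diagonal degree drop you flag as the hard part, the paper avoids any confluent-Vandermonde analysis. It observes that the matrix for \(\tau_{W\alpha\beta}\) differs from that for \(\tau_W\) only in the row \((k{=}d,\gamma{=}\beta)\), rewrites that row as \((\partial_{1\alpha}+\lambda_i)^{d-1}(\lambda_i\phi_{ij\alpha}+\partial_{1\alpha}\phi_{ij\alpha})\), and subtracts \(\lambda_1\) times the already-present row \((k{=}d,\gamma{=}\alpha)\); the surviving entries then have \(x\)-degree no higher than the corresponding \(\tau_W\) entries, forcing \(\deg_x\tau_{W\alpha\beta}<\deg_x\tau_W\) directly. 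As for the leading coefficient of \(\tau_W\) being constant, the paper simply notes that the top-degree part of each entry \((\partial_{1\gamma}+\lambda_i)^{k-1}\phi_{ij\gamma}\) involves only the \(\lambda_i\) and the coefficients of the \(c_{ij}\), hence so does the determinant's top-degree coefficient; it does not address the vanishing issue you raise.
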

\begin{proof}
  Let \(W = (C,q_{C})^{*}\) with \(C\) homogeneous and let \((\lambda_{1},
  \dots, \lambda_{d})\) be its support with each point counted according to
  its multiplicity (so that the \(\lambda_{i}\) are not necessarily distinct);
  finally for each \(i=1\dots d\) let \((c_{ij})_{j=1\dots r}\) be a set of
  \(r\) linearly independent conditions at \(\lambda_{i}\). In this way we get
  a basis \((c_{11}, \dots, c_{dr})\) for \(C\) made of 1-point conditions.
  Now consider the subspace \(U\deq (C,z^{d})^{*}\in \Grrat(r)\); it is
  related to \(W\) by the following element of \(\Gamma_{-}(r)\):
  \begin{equation}
    \label{eq:3:eta-mKP}
    \eta = \prod_{i=1}^{d} q_{z}(\lambda_{i})^{-1} I_{r} = \exp \biggl(
    \sum_{k\geq 0} \sum_{i=1}^{d} \frac{\lambda_{i}^{k}}{k} z^{-k}\biggr)
    I_{r}
  \end{equation}
  This corresponds 
  to multiplying the tau
  function by
  \begin{equation}
    \label{eq:3:def-hatg}
    \hat{\eta} = \prod_{\alpha=1}^{r} \exp \biggl( -\sum_{k\geq 0}
    \sum_{i=1}^{d} \lambda_{i}^{k} t^{(\alpha)}_{k}\biggr) =
    \prod_{\alpha=1}^{r} \prod_{i=1}^{d} g_{\alpha}(\lambda_{i})^{-1} =
    \prod_{i=1}^{d} h(\lambda_{i})^{-1}
  \end{equation}
  since \(g_{\alpha} = \tilde{g}\) for every \(\alpha\) and \(\tilde{g}^{r} =
  h\).

  Let's define the family of polynomials \(\{\phi_{ij\gamma}\}\) (for
  \(i=1\dots d\), \(j=1\dots r\), \(\gamma=1\dots r\)) by the equation
  \begin{equation}
    \label{eq:3:def-phi-ijb}
    \pair{c_{ij}}{g_{\gamma}} = \tilde{g}(\lambda_{i}) \phi_{ij\gamma}
  \end{equation}
  Again, this works precisely because each \(c_{ij}\) is a 1-point
  condition; notice that, although \(g_{\gamma} = \tilde{g}\) for every
  \(\gamma\), the polynomials \(\phi\) still depend on \(\gamma\) since it
  is the index of the only nonzero entry of the row vector on which
  \(c_{ij}\) acts. 

  We can now easily compute the tau functions associated to \(U\) by
  retracing the same steps as in the proof of theorem \ref{th:mcKP}, but
  now using the polynomials defined by \eqref{3:def-phi-ijb}. Then
  \(\tau_{U}(g)\) is the determinant of the matrix whose generic element is
  \[ \tilde{g}(\lambda_{i}) (\partial_{1\gamma} + \lambda_{i})^{k-1}
  \phi_{ij\gamma} \]
  The term \(\tilde{g}(\lambda_{i})\) does not depend on the row indices
  \((k,\gamma)\) so that we can factor it out from the determinant and get
  \begin{equation}
    \label{eq:3:tau-U-mKP}
    \tau_{U} = \prod_{i=1}^{d} (\tilde{g}(\lambda_{i}))^{r} \det
    \left( (\partial_{1\gamma} + \lambda_{i})^{k-1} \phi_{ij\gamma}\right)
  \end{equation}
  But \(\prod_{i} (\tilde{g}(\lambda_{i}))^{r} = \prod_{i} h(\lambda_{i})\)
  is exactly the inverse of \eqref{3:def-hatg}, so that
  \begin{equation}
    \label{eq:3:tau-W-mKP}
    \tau_{W} = \det \left( (\partial_{1\gamma} + \lambda_{i})^{k-1}
      \phi_{ij\gamma}\right)
  \end{equation}
  is the determinant of a matrix with polynomial entries in
  \(t_{1}^{(\alpha)} = \frac{x}{r}\), hence a polynomial in \(x\) and the
  coefficient of the top degree term involves only the constants
  \(\lambda_{i}\). By Sato's formula, this implies that
  \(\tilde{\psi}_{W\alpha\alpha}\rightarrow 1\) as \(x\rightarrow \infty\).

  Now take \(\alpha,\beta\in \{1, \dots, m\}\), \(\alpha\neq \beta\) and
  consider the off-diagonal tau function \(\tau_{U\alpha\beta}(g)\); it is
  given by the determinant of a matrix \(M_{\alpha\beta}(g)\) which
  coincides with the one involved in the definition of \(\tau_{U}(g)\)
  except for the row corresponding to \(k=d-1\), \(\gamma=\beta\) which is
  replaced by the row \(\pair{c_{ij}}{\partial_{1\alpha}^{d} g_{\alpha}}\).
  But since \(g_{\alpha} = g_{\beta} = \tilde{g}\) we can again collect out
  of the determinant the same factor as before, so that
  \(\tau_{W\alpha\beta}(g) = \det \Phi_{ij,k\gamma}\) with
  \begin{equation}
    \label{eq:def-Phi-ijkg}
    \Phi_{ij,k\gamma}\deq
    \begin{cases}
      (\partial_{1\gamma} + \lambda_{i})^{k-1} \phi_{ij\gamma} &\text{ if }
      k\neq d\text{ or } \gamma\neq\beta\\
      (\partial_{1\alpha} + \lambda_{i})^{d} \phi_{ij\alpha} &\text{ if }
      k=d\text{ and } \gamma=\beta
    \end{cases}
  \end{equation}
  This is also a polynomial in \(x\); moreover we can write
  \begin{equation}
    \label{eq:3:subst-row}
    (\partial_{1\alpha} + \lambda_{i})^{d} \phi_{ij\alpha} =
    (\partial_{1\alpha} + \lambda_{i})^{d-1} (\lambda_{i}\phi_{ij\alpha}
    + \partial_{1\alpha}\phi_{ij\alpha})
  \end{equation}
  But now \(M_{\alpha\beta}\) has also a row (for \(k=d-1\),
  \(\gamma=\alpha\)) whose generic  element reads \((\lambda_{i}
  + \partial_{1\alpha})^{d-1} \phi_{ij\alpha}\), and we can subtract this
  row multiplied by \(\lambda_{1}\) (say) to the row \eqref{3:subst-row}
  without altering the determinant, so that
  \[ \Phi_{ij,k\gamma}\deq
  \begin{cases}
    (\partial_{1\gamma} + \lambda_{i})^{k-1} \phi_{ij\gamma} &\text{ if }
    k\neq d\text{ or } \gamma\neq\beta\\
    (\partial_{1\alpha} + \lambda_{i})^{d-1} ((\lambda_{i}-\lambda_{1})
    \phi_{ij\alpha} + \partial_{1\alpha}\phi_{ij\alpha}) &\text{ if }
    k=d\text{ and } \gamma=\beta
  \end{cases} \]
  This means that \(\tau_{W\alpha\beta}\) is the determinant of a matrix
  whose generic entry is equal or of degree strictly lower than the
  corresponding one on \(\tau_{W}\); it follows that the degree of
  \(\tau_{W\alpha\beta}\) is strictly lower than \(\tau_{W}\), and this
  (again by Sato's formula) implies that the off-diagonal components of
  \(\tilde{\psi}_{W}\) tend to zero as \(x\rightarrow \infty\).
\end{proof}
To see how some of the new solutions look like, take for example the
homogeneous subspace with 1-point support \(W = (C,\frac{1}{z-\lambda})^{*}\)
with \(C\) generated by the \(r\) conditions
\begin{equation}
  \label{eq:cond-sp}
  c_{k} = \eval_{k,1,\lambda} + a_{k1} \eval_{1,0,\lambda} + \dots + a_{kr}
  \eval_{r,0,\lambda} \qquad (k=1\dots r)
\end{equation}
determined by the \(r\times r\) matrix \(A = (a_{ij})\). These are, in a
sense, the most general conditions involving only the functionals
\(\eval_{k,s,\lambda}\) with \(s\leq 1\). For the sake of brevity, let's set
\(\bm{t}_{\lambda}\deq r^{-1}(x + 2t_{2}\lambda + 3t_{3}\lambda^{2} +
\dots)\); then
\[ \tau_{W}(\bm{t}) = \det (\bm{t}_{\lambda}I_{r} + \trasp{A}) \qquad
\tau_{W\alpha\beta}(\bm{t}) = -\cof_{\beta,\alpha} (\bm{t}_{\lambda}I_{r} +
\trasp{A}) \]
\[ \tilde{\psi}(\bm{t},z) = I_{r} - (\bm{t}_{\lambda}I_{r} + \trasp{A})^{-1}
\frac{1}{z-\lambda} \]
where \(\cof_{\beta,\alpha}\) stands for the \((\beta,\alpha)\)-cofactor of a
matrix and \(\trasp{A}\) for the transpose of \(A\).

More generally, we could take the element of \(\Grhom(r)\) determined by the
support \(\{\lambda_{1}, \dots, \lambda_{n}\}\) and, for each \(i=1\dots n\),
a matrix \(A_{i}\) that specifies a set of conditions of the form
\eqref{cond-sp} to be imposed at the point \(\lambda_{i}\). The tau function
of such a subspace is the determinant of the following block matrix:
\[
\begin{pmatrix}
  Y_{1} & \hdots & Y_{n}\\
  \lambda_{1}Y_{1} + I_{r} & \hdots & \lambda_{n}Y_{n} + I_{r}\\
  \vdots & \ddots & \vdots\\
  \lambda_{1}^{n-1} Y_{1} + (n-1)\lambda_{1}^{n-2} I_{r} & \hdots &
  \lambda_{n}^{n-1} Y_{n} + (n-1)\lambda_{n}^{n-2} I_{r}
\end{pmatrix}
\]
where \(Y_{i}\deq \bm{t}_{\lambda_{i}}I_{r} + \trasp{A_{i}}\). The off-diagonal
tau functions \(\tau_{W\alpha\beta}\) are obtained in the usual way (i.e.,
replacing the \(\beta\)-th line of the bottom blocks with the \(\alpha\)-th
line of the blocks \(\lambda_{i}^{n} Y_{i} + n\lambda_{i}^{n-1} I_{r}\)), and
the matrix Baker function is then given by Sato's formula.  When \(r=1\),
these subspaces are exactly the ones described in \cite[Sect. 3]{wils98}.

\section{Relationship with the multicomponent KP/CM correspondence}
\label{sec:5}

Recall from \cite{wils98} that a crucial ingredient of the scalar KP/CM
correspondence is the bijective map \(\app{\beta}{\CM}{\Grad}\) between the
phase space of the Calogero-Moser system \(\CM\) and the adelic Grassmannian
\(\Grad\). In the multicomponent case, an analogous mapping should be defined
between the phase space of the Gibbons-Hermsen system \cite{gh84} (also known
as ``spin Calogero-Moser'') and some space of solutions for the matrix KP
hierarchy, as suggested by a well-known calculation \cite{bbkt94}.

In Wilson's unpublished notes \cite{wils09}, a map that fulfils this r\^ole is
conjectured. In order to describe (part of) the definition of this map let's
recall (see e.g. \cite[Sect. 8]{wils98}) that the completed phase space of
the \(n\)-particle, \(r\)-component Gibbons-Hermsen system is the (smooth,
irreducible) affine algebraic variety defined by the following symplectic
reduction:
\[ \mathcal{C}_{n,r} = \quoz{\set{(X,Y,v,w) | \comm{X}{Y} + vw =
    -I}}{\GL(n,\bb{C})} \]
where \((X,Y,v,w)\in \End(\bb{C}^{n}) \oplus \End(\bb{C}^{n}) \oplus
\Hom(\bb{C}^{n},\bb{C}^{r}) \oplus \Hom(\bb{C}^{r},\bb{C}^{n})\) and
\(\GL(n,\bb{C})\) acts as a change of basis in \(\bb{C}^{n}\). Now denote by
\(\CM_{n,r}''\) the subspace of \(\CM_{n,r}\) consisting of equivalence
classes of quadruples for which \(Y\) is diagonalisable with distinct
eigenvalues; from each of these classes we can select a representative such
that \(Y = \diag (\lambda_{1}, \dots, \lambda_{n})\), \(X\) is a Moser matrix
associated to \(Y\) (i.e. \(X_{ij} = 1/(\lambda_{i}-\lambda_{j})\) and
\(X_{ii} = \alpha_{i}\) is any complex number) and each pair \((v_{i},w_{i})\)
(where we denote by \(v_{i}\) the \(i\)-th row of \(v\) and by \(w_{i}\) the
\(i\)-th column of \(w\)) belongs to the algebraic variety
\(\set{(\xi,\eta)\in \bb{C}^{r}\times \bb{C}^{r} | \xi\cdot \eta =
  -1}/\bb{C}^{*}\), where \(\lambda\in \bb{C}^{*}\) acts as
\begin{equation}
  \label{eq:16}
  \lambda.(\xi,\eta) = (\lambda\xi, \lambda^{-1}\eta)
\end{equation}
Notice in particular that none of the \(v_{i}\)'s and \(w_{i}\)'s can be the
zero vector, by virtue of the normalisation condition \(v_{i}w_{i} = -1\).

According to \cite{wils09}, a point \([X,Y,v,w]\in \CM_{n,r}''\) corresponds
to the subspace \(W\in \Grrat(r)\) defined by the following prescriptions:
\begin{enumerate}
\item \label{e:Wcond-reg} functions in \(W\) are regular except for (at most)
  a simple pole in each \(\lambda_{i}\) and a pole of any order at infinity;
\item \label{e:Wcond-lau} if \(f = \sum_{k\geq -1} f_{k}^{(i)}
  (z-\lambda_{i})^{k}\) is the Laurent expansion of \(f\in W\) in
  \(\lambda_{i}\) then:
  \begin{enumerate}
  \item \label{e:Wcond-1} \(f_{-1}^{(i)}\) is a scalar multiple of \(v_{i}\),
    and
  \item \label{e:Wcond-2} \((f_{0}^{(i)} + \alpha_{i} f_{-1}^{(i)})\cdot w_{i}
    = 0\).
  \end{enumerate}
\end{enumerate}
Our purpose is now to show how these prescriptions translate in the language
of the previous sections.

So suppose that \(W\in \Grrat(r)\) satisfies conditions
(\ref{e:Wcond-reg}--\ref{e:Wcond-lau}); we must find a finite-dimensional,
homogeneous space of conditions \(C\) such that \(W = (C,q_{C})^{*}\) where
\(q_{C}\) is given by \eqref{def-qC}. Condition \ref{e:Wcond-reg} clearly
implies that \(C\) has support \(\{\lambda_{1}, \dots, \lambda_{n}\}\) and
\(q_{C} = \prod_{i=1}^{n} (z-\lambda_{i})\). Thus, we only need to find, for
each \(i=1\dots n\), a subspace \(C_{i}\subseteq
\mathscr{C}^{(r)}_{\lambda_{i}}\) of dimension \(r\) whose elements satisfy
the conditions (\ref{e:Wcond-1}--\ref{e:Wcond-2}). Let's define the \(n\)
polynomials
\[ q_{i}\deq \frac{q_{C}}{z-\lambda_{i}} = \prod_{j\neq i} (z-\lambda_{j}) \]
for \(i=1\dots n\); then by a direct computation we have that
\[ \eval_{k,0,\lambda_{i}}(q_{C}f) = q_{i}(\lambda_{i}) f^{(i)}_{-1,k} \]
and similarly
\[ \eval_{k,1,\lambda_{i}}(q_{C}f) = \sum_{\ell\neq i} \prod_{j\neq i,\ell}
(\lambda_{i}-\lambda_{j}) f^{(i)}_{-1,k} + q_{i}(\lambda_{i}) f^{(i)}_{0,k} \]
whence (putting \(\delta_{i}\deq \sum_{\ell\neq i}
(\lambda_{i}-\lambda_{\ell})^{-1}\))
\[ q_{i}(\lambda_{i}) f^{(i)}_{0,k} = (\eval_{k,1,\lambda_{i}} - \delta_{i}
\eval_{k,0,\lambda_{i}}) (q_{C}f) \]
Using these expressions, (\ref{e:Wcond-1}) reads
\[ \eval_{k,0,\lambda_{i}}(q_{C}f) = q_{i}(\lambda_{i}) a_{i} v_{ik}
\quad\text{ for all } k=1\dots r \]
for some \(a_{i}\in \bb{C}\), but only \(r-1\) of these conditions are
independent because of the \(\bb{C}^{*}\)-action \eqref{16} on \(v_{i}\). As
we already noticed, at least one entry of \(v_{i}\) is nonzero, say
\(v_{i1}\neq 0\). Then we can use the \(\bb{C}^{*}\)-action to normalise
\(v_{i1} = 1\), so that \(\eval_{1,0,\lambda_{i}} (q_{C}f) =
q_{i}(\lambda_{i}) a_{i}\) and the remaining \(r-1\) conditions can be written
as
\[ \eval_{k,0,\lambda_{i}}(q_{C}f) - v_{ik} \eval_{1,0,\lambda_{i}}(q_{C}f) =
0 \quad\text{ for all } k=2\dots r \]
The further condition (\ref{e:Wcond-2}) directly translates as
\[ \sum_{k=1}^{r} w_{ki} \left( \eval_{k,1,\lambda_{i}} (q_{C}f) + (\alpha_{i} -
  \delta_{i}) \eval_{k,0,\lambda_{i}} (q_{C}f)\right) = 0 \]
where \(w_{1i}\) is fixed by the equation \(v_{i}w_{i} = -1\). We conclude
that, in the \((r-1)\)-dimensional affine cell where \(v_{i1}\neq 0\), we can
take as \(C_{i}\) the subspace
\[ C_{i} = \left\langle \eval_{2,0,\lambda_{i}} - v_{i2} \eval_{1,0,\lambda_{i}},
\dots, \eval_{r,0,\lambda_{i}} - v_{ir} \eval_{1,0,\lambda_{i}},
\sum_{k=1}^{r} w_{ki} (\eval_{k,1,\lambda_{i}} + (\alpha_{i} - \delta_{i})
\eval_{k,0,\lambda_{i}}) \right\rangle \]
Analogous descriptions are available in the other cells, where \(v_{ik}=0\)
for every \(k<k^{*}\) and \(v_{ik^{*}}\neq 0\). Repeating this argument for
every \(i=1\dots n\), we obtain a homogeneous subspace \(C = C_{1}\oplus
\dots\oplus C_{n}\) in \(\mathscr{C}^{(r)}\) such that \(W = (C,q_{C})^{*}\),
as we wanted. 

To see a concrete example in the simplest possible case (\(n=r=2\)), consider
the space associated to a point
\[ \left[
\begin{pmatrix}
  \alpha_{1} & \frac{v_{1}w_{2}}{\lambda_{1}-\lambda_{2}}\\
  \frac{v_{2}w_{1}}{\lambda_{2}-\lambda_{1}} & \alpha_{2}
\end{pmatrix},
\begin{pmatrix}
  \lambda_{1} & 0\\
  0 & \lambda_{2}
\end{pmatrix},
\begin{pmatrix}
  v_{11} & v_{12}\\
  v_{21} & v_{22}
\end{pmatrix},
\begin{pmatrix}
  w_{11} & w_{12}\\
  w_{21} & w_{22}
\end{pmatrix}
\right] \in \mathcal{C}_{2,2}'' \]
Suppose further that \(v_{11}\neq 0\) and \(v_{21}\neq 0\); the corresponding
subspace \(W\in \Grhom(2)\) is then given by \((z-\lambda_{1})^{-1}
(z-\lambda_{2})^{-1} V_{C}\), where \(C\subseteq \mathscr{C}^{(2)}\) is
generated by the four conditions
\begin{align*}
  c_{11} &= \eval_{2,0,\lambda_{1}} - v_{12} \eval_{1,0,\lambda_{1}}\\
  c_{12} &= w_{11} \eval_{1,1,\lambda_{1}} + w_{21} \eval_{2,1,\lambda_{1}} +
  w_{11} (\alpha_{1}+\delta) \eval_{1,0,\lambda_{1}} + w_{21}
  (\alpha_{1}+\delta) \eval_{2,0,\lambda_{1}}\\
  c_{21} &= \eval_{2,0,\lambda_{2}} - v_{22} \eval_{1,0,\lambda_{2}}\\
  c_{22} &= w_{12} \eval_{1,1,\lambda_{2}} + w_{22} \eval_{2,1,\lambda_{2}} +
  w_{12} (\alpha_{2}-\delta) \eval_{1,0,\lambda_{2}} + w_{22}
  (\alpha_{2}-\delta) \eval_{2,0,\lambda_{2}}
\end{align*}
where \(\delta\deq \delta_{2}=-\delta_{1}=(\lambda_{2}-\lambda_{1})^{-1}\),
\(w_{11}=-1-v_{12}w_{21}\) and \(w_{12}=-1-v_{22}w_{22}\). After some
computation we find
\[ \tau_{W}(\bm{t}) = (\bm{t}_{\lambda_{1}} + \alpha_{1})(\bm{t}_{\lambda_{2}}
+ \alpha_{2}) + \delta^{2} (v_{22}w_{21} + w_{11})(v_{12}w_{22} + w_{12}) \]
\[ \tau_{W12}(\bm{t}) = -w_{12}v_{22} (\bm{t}_{\lambda_{1}} + \alpha_{1}) -
w_{11}v_{12} (\bm{t}_{\lambda_{2}} + \alpha_{2}) + \delta \bigl( 2v_{12}v_{22}
(w_{12}w_{21}-w_{11}w_{22}) + v_{22}w_{12} - v_{12}w_{11}\bigr) \]
\[ \tau_{W21}(\bm{t}) = -w_{22}(\bm{t}_{\lambda_{1}} + \alpha_{1}) -
w_{21}(\bm{t}_{\lambda_{2}} + \alpha_{2}) + \delta \bigl( 2w_{21}w_{22}
(v_{22}-v_{12}) + w_{21} - w_{22}\bigr) \]
We remark that \(\tau_{W}\) coincides with the determinant of the matrix \(X\)
under the evolution described by the Gibbons-Hermsen flows \(H_{k} = \tr
Y^{k}\) (taking \(t_{k}\) as the time associated to \(H_{k}\)):
\[ \tau_{W}(\bm{t}) = \det
\begin{pmatrix}
  \alpha_{1} + \bm{t}_{\lambda_{1}} &
  \frac{\tilde{g}(\lambda_{1})}{\tilde{g}(\lambda_{2})}
  \frac{v_{1}w_{2}}{\lambda_{1} - \lambda_{2}}\\
  \frac{\tilde{g}(\lambda_{2})}{\tilde{g}(\lambda_{1})}
  \frac{v_{2}w_{1}}{\lambda_{2} - \lambda_{1}} & \alpha_{2} + \bm{t}_{\lambda_{2}}
\end{pmatrix} \]
This result fits nicely with the general formulae derived in \cite{wils09}.

\bibliographystyle{plain}
\bibliography{master}

\begin{thebibliography}{10}

\bibitem{bzn03}
D.~Ben-Zvi and T.~Nevins.
\newblock From solitons to many-body systems.
\newblock {\em Pure Appl. Math. Q.}, 4(2, part 1):319--361, 2008.
\newblock {\tt arXiv:math/0310490}.

\bibitem{bw00}
Y.~Berest and G.~Wilson.
\newblock Automorphisms and ideals of the {W}eyl algebra.
\newblock {\em Math. Ann.}, 318:127--147, 2000.

\bibitem{dick93}
L.~A. Dickey.
\newblock On {S}egal-{W}ilson's definition of the $\tau$-function and
  hierarchies {AKNS-D} and {mcKP}.
\newblock In Olivier Babelon, Pierre Cartier, and Yvette Kosmann-Schwarzbach,
  editors, {\em Integrable Systems. The Verdier Memorial Conference}, pages
  147--162. Birkh\"auser, 1993.

\bibitem{dm95}
R.~Donagi and E.~Markman.
\newblock Spectral covers, algebraically completely integrable {H}amiltonian
  systems, and moduli of bundles.
\newblock In {\em Integrable systems and quantum groups ({M}ontecatini {T}erme,
  1993)}, volume 1620 of {\em Lecture Notes in Math.}, pages 1--119. Springer,
  1996.
\newblock {\tt arXiv:alg-geom/9507017v2}.

\bibitem{gh84}
J.~Gibbons and T.~Hermsen.
\newblock A generalization of the {C}alogero-{M}oser system.
\newblock {\em Physica}, 11D:337--348, 1984.

\bibitem{krich78}
I.~M. Krichever.
\newblock Rational solutions of the {K}adomtsev-{P}etviashvili equation and
  integrable systems of $n$ particles on a line.
\newblock {\em Funct. Anal. Appl.}, 12(1):59--61, 1978.

\bibitem{bbkt94}
I.~M. Krichever, O.~Babelon, E.~Billey, and M.~Talon.
\newblock Spin generalization of the {C}alogero-{M}oser system and the matrix
  {KP} equation.
\newblock In {\em Topics in topology and mathematical physics}, volume 170 of
  {\em Amer. Math. Soc. Transl. Ser. 2}, pages 83--119. Amer. Math. Soc., 1995.
\newblock {\tt arXiv:hep-th/9411160}.

\bibitem{ps86}
A.~Pressley and G.~Segal.
\newblock {\em Loop Groups}.
\newblock Clarendon Press, 1986.

\bibitem{sato81}
M.~Sato and Y.~Sato.
\newblock Soliton equations as dynamical systems on an infinite dimensional
  {G}rassmann manifold.
\newblock {\em RIMS Kokyuroku}, 439:30--46, 1981.

\bibitem{sw85}
G.~Segal and G.~Wilson.
\newblock Loop groups and equations of {KdV} type.
\newblock {\em Publ. Math. I.H.E.S.}, 61:5--65, 1985.

\bibitem{tesi}
A.~Tacchella.
\newblock {\em A multicomponent generalization of the {KP}/{CM}
  correspondence}.
\newblock PhD thesis, Universit\`a degli studi di {G}enova, 2010.

\bibitem{wils93}
G.~Wilson.
\newblock Bispectral commutative ordinary differential operators.
\newblock {\em J. reine angew. Math.}, 442:177--204, 1993.

\bibitem{wils98}
G.~Wilson.
\newblock Collisions of {C}alogero-{M}oser particles and an adelic
  {G}rassmannian.
\newblock {\em Invent. Math.}, 133:1--41, 1998.

\bibitem{wils09}
G.~Wilson.
\newblock Notes on the vector adelic {G}rassmannian.
\newblock Unpublished, 31/12/2009.

\end{thebibliography}

\end{document}